\documentclass[11pt]{article}

\pdfoutput=1

\usepackage{a4}
\usepackage[top=30truemm,bottom=30truemm]{geometry}

\usepackage[running]{lineno}

\usepackage{authblk}
\usepackage{amsmath}
\usepackage{amssymb}
\usepackage{amsfonts}
\usepackage{amsthm}
\usepackage{cases}
\usepackage{url}
\usepackage{graphicx}

\usepackage{url}
\usepackage{multirow}
\usepackage{multicol}
\usepackage{cases}


\newcommand{\prev}{\mathsf{prev}}

\newcommand{\occ}{\mathit{occ}}


\newcommand{\spe}{\mathsf{spe}}

\newcommand{\PSTray}{\mathsf{PSTray}}
\newcommand{\PSTree}{\mathsf{PSTree}}
\newcommand{\PSA}{\mathsf{PSA}}
\newcommand{\PLCP}{\mathsf{PLCP}}

\newcommand{\fpos}{\mathsf{fpos}}
\newcommand{\f}{\mathsf{f}}

\newcommand{\rank}{\mathsf{rank}}


\newtheorem{theorem}{Theorem}
\newtheorem{lemma}{Lemma}

\theoremstyle{definition}
\newtheorem{definition}{Definition}


\pagestyle{plain}

\title{The Parameterized Suffix Tray}

\author{Noriki~Fujisato$^{1}$}
\author{Yuto~Nakashima$^{1}$}
\author{Shunsuke~Inenaga$^{1,2}$}
\author{Hideo~Bannai$^3$}
\author{Masayuki~Takeda$^{1}$}

\affil{
  \normalsize{
  \textit{$^1$Department of Informatics, Kyushu University, Fukuoka, Japan}\\
  \texttt{\{noriki.fujisato, yuto.nakashima, inenaga, takeda\}@inf.kyushu-u.ac.jp}\\
  \textit{$^2$PRESTO, Japan Science and Technology Agency, Kawaguchi, Japan}\\
  \textit{$^3$M\&D Data Science Center, Tokyo Medical and Dental University, Tokyo, Japan}\\
  \texttt{hdbn.dsc@tmd.ac.jp}
  }
}

\date{}

\begin{document}
\maketitle

\begin{abstract}
  Let $\Sigma$ and $\Pi$ be disjoint alphabets,
  respectively called the static alphabet and the parameterized alphabet.
  Two strings $x$ and $y$ over $\Sigma \cup \Pi$
  of equal length are said to \emph{parameterized match} (\emph{p-match})
  if there exists a renaming bijection $f$ on $\Sigma$ and $\Pi$
  which is identity on $\Sigma$ and maps the characters of $x$ to those
  of $y$ so that the two strings become identical.
  The indexing version of the problem of finding p-matching occurrences
  of a given pattern in the text is a well-studied topic in string matching.
  In this paper, we present a state-of-the-art indexing structure for p-matching
  called the \emph{parameterized suffix tray} of an input text $T$,
  denoted by $\PSTray(T)$.
  We show that $\PSTray(T)$
  occupies $O(n)$ space and supports pattern matching queries in $O(m + \log (\sigma+\pi) + \occ)$ time,
  where $n$ is the length of $t$, $m$ is the length of a query pattern $P$, $\pi$ is the number of distinct symbols of $|\Pi|$ in $T$,
  $\sigma$ is the number of distinct symbols of $|\Sigma|$ in $T$
  and $\occ$ is the number of p-matching occurrences of $P$ in $T$.
  We also present how to build $\PSTray(T)$ in $O(n)$ time
  from the parameterized suffix tree of $T$.
\end{abstract}

\section{Introduction}

\emph{Parameterized Pattern Matching} (PPM),
first introduced by Baker~\cite{Baker96} in 1990's,
is a well-studied class of pattern matching
motivated by plagiarism detection, software maintenance,
and RNA structural matching~\cite{Baker96,Shibuya04,MendivelsoTP20}.

PPM is defined as follows:
Let $\Sigma$ and $\Pi$ be disjoint alphabets.
Two equal-length strings $x$ and $y$ from $\Sigma \cup \Pi$ are said to parameterized match (p-match) if $x$ can be transformed to $y$ by applying a bijection which renames the elements of $\Pi$ in $x$ (the elements of $\Sigma$ in $x$ must remain unchanged).
PPM is to report every substring in a text $T$ that p-matches a pattern $P$.

In particular, the indexing version of PPM,
where the task is to preprocess an input text string $T$
so that parameterized occurrences of $P$ in $T$ can be reported quickly,
has attracted much attention for more than two decades
since the seminal paper by Baker~\cite{Baker96}.

Basically, the existing indexing structures for p-matching are
designed upon indexing structure for exact pattern matching.
Namely,
\emph{parameterized suffix trees}~\cite{Baker96},
\emph{parameterized suffix arrays}~\cite{DeguchiHBIT08},
\emph{parameterized DAWGs}~\cite{NakashimaFHNYIB20},
\emph{parameterized CDAWGs}~\cite{NakashimaFHNYIB20},
\emph{parameterized position heaps}~\cite{Kucherov13,FujisatoNIBT18},
and \emph{parameterized BWTs}~\cite{DBLP:conf/soda/0002ST17}
are based on
their exact matching counterparts:
suffix trees~\cite{Weiner},
suffix arrays~\cite{manber93:_suffix},
DAWGs~\cite{blumer85:_small_autom_recog_subwor_text},
CDAWGs~\cite{Blumer87},
position heaps~\cite{Kucherov13,ehrenfeucht_position_heaps_2011},
and BWTs~\cite{BWT}, respectively.
It should be emphasized that extending exact-matching indexing structures
to parameterized matching is not straightforward and poses algorithmic challenges.
Let $n$, $m$, $\pi$ and $\sigma$ be the lengths of a text $T$, a pattern $P$, the number of distinct symbols of $|\Pi|$ that appear in $T$ and the number of distinct symbols of $|\Sigma|$ that appear in $T$, respectively.
While there exist a number of algorithms which construct
the suffix array for $T$ in $O(n)$ time in the case of integer alphabets
of polynomial size in $n$~\cite{Farach-ColtonFM00,KarkkainenSB06,KoA05,KimSPP05,NongZC11,Baier16},
the best known algorithms build
the parameterized suffix array for $T$ (denoted $\PSA(T)$)
in $O(n \log(\sigma+\pi))$ time via the suffix tree~\cite{Baker96,Shibuya04},
or directly in $O(n\pi)$ time~\cite{FujisatoNIBT19}.
The existence of a pure linear-time algorithm for building
$\PSA(T)$ and the parameterized suffix tree (denoted $\PSTree(T)$)
in the case of integer alphabets remains open.

PPM queries can be supported
in $O(m+\log n + \occ)$ time by
$\PSA(T)$ coupled with the parameterized LCP array (denoted $\PLCP(T)$)~\cite{DeguchiHBIT08},
or in $O(m\log (\sigma+\pi)+\occ)$ time by $\PSTree(T)$~\cite{Baker96},
where $\occ$ is the number of occurrences to report.

In this paper, we propose a new indexing structure for p-matching,
the \emph{parameterized suffix tray} for $T$ (denoted $\PSTray(T)$).
$\PSTray(T)$ is a combination of $\PSTree(T)$ and $\PSA(T)$
and is an analogue to the \emph{suffix tray} indexing structure
for exact matching~\cite{ColeTL2015}.
We show that our $\PSTray(T)$
\begin{itemize}
  \item[(1)] occupies $O(n)$ space,
  \item[(2)] supports PPM queries in $O(m+\log(\sigma+\pi)+\occ)$, and
  \item[(3)] can be constructed in $O(n)$ time from $\PSTree(T)$ and $\PSA(T)$.
\end{itemize}
Result (3) implies that $\PSTray(T)$ can be constructed in
$O(n \min\{\log(\sigma + \pi), \pi\})$ time using $O(n)$ working space~\cite{Baker96,Shibuya04,FujisatoNIBT19}.
Results (1) and (2) together with this
imply that our $\PSTray(T)$ is the fastest linear-space
indexing structure for PPM which can be built in time linear in $n$.

We emphasize that extending suffix trays
for exact matching~\cite{ColeTL2015} to parameterized matching is also not
straightforward.
The suffix tray of a string $T \in \Sigma^*$ is a hybrid data structure of the
suffix tree and suffix array of $T$, designed as follows:
Each of the $O(\frac{n}{\sigma})$ carefully-selected nodes of the suffix tree
stores an array of fixed size $\sigma$,
so that pattern traversals within these selected nodes take $O(m)$ time
(this also ensures a total space to be $O(\frac{n}{\sigma} \times \sigma) = O(n)$).
Once the pattern traversal reaches an unselected node,
then the search switches to the sub-array of the suffix array of size $O(\sigma)$.
This ensures a worst-case $O(m + \log \sigma + \occ)$-time pattern matching
with the suffix tray.

Now, recall that the previous-encoded suffixes of $T$ are
sequences over an alphabet $\Sigma \cup \{0, \ldots, n-1\}$
of size $\Theta(\sigma + n) \subseteq O(n)$,
while the alphabet size of $T$ is $\sigma + \pi$.
This means that na\"ive extensions of suffix trays to PPM
would only result in either super-linear $O(\frac{n^2}{\sigma + \pi})$ space,
or $O(m + \log n + \occ)$ query time which can be achieved
already with the parameterized suffix array.
We overcome this difficulty by using
the \emph{smallest parameterized encoding} (\emph{spe}) of strings
which was previously proposed by the authors in the context of
PPM on labeled trees~\cite{FujisatoNIBT19b},
and this leads to our $O(n)$-space parameterized suffix trays
with desired $O(m + \log(\sigma + \pi) + \occ)$ query time.

\section{Preliminaries}

Let $\Sigma$ and $\Pi$ be disjoint ordered sets of characters,
respectively called the static alphabet and the parameterized alphabet.
We assume that any character in $\Pi$ is lexicographically smaller
than any character in $\Sigma$.
An element of $(\Sigma \cup \Pi)^*$ is called a \emph{p-string}.
For a (p-)string $w = xyz$, $x$, $y$ and $z$ are called a \emph{prefix}, \emph{substring}, and \emph{suffix} of $w$.
The $i$-th character of a (p-)string $w$ is denoted by $w[i]$ for $1 \leq i \leq |w|$,
and the substring of a (p-)string $w$ that begins at position $i$ and ends at position $j$ is denoted by $w[i:j]$ for $1 \leq i \leq j \leq |w|$. For convenience, let $w[i:j] = \varepsilon$ if $j < i$.
Also, let $w[i:] = w[i:|w|]$ for any $1 \leq i \leq |w|$,
and $w[:j] = w[1:j]$ for any $1 \leq j \leq |w|$.
For any (p-)string $w$, let $w^R$ denote the reversed string of $w$.
If a p-string $x$ is lexicographically smaller than a p-string $y$,
then we write $x<y$.

\begin{definition}[Parameterized match~\cite{Baker93}]
  Two p-strings $x$ and $y$ of the same length are said to \emph{parameterized match} (\emph{p-match}) iff there is a bijection $f$ on $\Sigma \cup \Pi$ such that
  $f(c) = c$ for any $c \in \Sigma$ and $x[i]=f(y[i])$ for any $1\leq i\leq |x|$.
\end{definition}

We write $x \approx y$ iff two p-strings $x,y$ p-match. For instance, if $\Sigma=\{\mathtt{A},\mathtt{B}\}$, $\Pi=\{\mathtt{x},\mathtt{y},\mathtt{z}\}$, then $X=\mathtt{xyzAxxxByzz}$ and $Y=\mathtt{zxyAzzzBxyy}$ p-match since there is a bijection $f$ such that $f(\mathtt{A})=\mathtt{A}$, $f(\mathtt{B})=\mathtt{B}$, $f(\mathtt{x})=\mathtt{z}$, $f(\mathtt{y})=\mathtt{x}$, $f(\mathtt{z})=\mathtt{y}$, and $$f(\mathtt{x})f(\mathtt{y})f(\mathtt{z})f(\mathtt{A})f(\mathtt{x})f(\mathtt{x})f(\mathtt{x})f(\mathtt{B})f(\mathtt{y})f(\mathtt{z})f(\mathtt{z})=\mathtt{zxyAzzzBxyy}=Y.$$

\begin{definition}[Parameterized Pattern Matching problem(PPM)~\cite{Baker93}]
  Given a text p-string $T$ and a pattern p-string $P$, find all positions $i$ in $T$ such that $T[i:i+|P|-1]\approx p$.
\end{definition}

For instance, if $\Sigma=\{\mathtt{A}\}$, $\Pi=\{\mathtt{x},\mathtt{y}$, $\mathtt{z}\}$, $T=\mathtt{xyzAxxxAyyzAzx}$, and $P=\mathtt{yAzz}$, then the out put for PPM is $\{3,7\}$.
We call the positions in the output of PPM the \emph{p-beginning positions}
for given text $T$ and pattern $P$. We say that the pattern \emph{p-appears} in the text $T$ iff the pattern and a substring of the text p-match. In this paper, we suppose that a given text $T$ terminates with a special end-marker $\$$ which occurs nowhere else in $T$. We assume that $\$$ is an element of $\Sigma$ and $\$$ is lexicographically larger than any elements from $\Sigma$ and $\Pi$.

\begin{definition}[Previous encoding~\cite{Baker93}]
  \label{prev}
  For a p-string $w$, the previous encoding $\prev(w)$ is a string of length $|w|$ such that for each $1 \leq i \leq |w|$,
  \[
    \prev(w)[i] =
    \begin{cases}
      w[i] & \mbox{if $w[i]\in\Sigma$},                                        \\
      0    & \mbox{if $w[i]\in\Pi$ and $w[j]\neq w[i]$ for any $1\leq j < i$}, \\
      i-j  & \mbox{otherwise, $w[i]=w[j]$ and $w[i]\neq w[k]$ for any $j<k<i$}.
    \end{cases}
  \]
\end{definition}
Intuitively, when we transform $w$ to $\prev(w)$,
the first occurrence of each element of $\Pi$ is replaced with 0 and any other occurrence of the element of $\Pi$ is replaced by the distance to the previous occurrence of the same character, and each element of $\Sigma$ remains the same.

\begin{definition}[Smallest parameterized encoding (spe)~\cite{FujisatoNIBT19b}]
  \label{spe}
  For a p-string $w$, the smallest parameterized encoding $\spe(w)$ is
  the lexicographically smallest p-string such that $w \approx \spe(w)$.
\end{definition}
Namely, $\spe(w)$ maps a given string $w$
to the representative of the equivalence class of p-strings
under p-matching $\approx$.

For any two p-strings $w_1,w_2$, $\prev(w_1)= \prev(w_2)\Leftrightarrow \spe(w_1)= \spe(w_2)\Leftrightarrow w_1\approx w_2$.
For instance, let $\Sigma=\{\mathtt{A,B}\}$, $\Pi=\{\mathtt{x},\mathtt{y},\mathtt{z}\}$, $X=\mathtt{yxzAyyyBxzz}$, and $Y=\mathtt{zxyAzzzBxyy}$. Then $\prev(X)=\mathtt{000A411B771}=\prev(Y)$ and $\spe(X)=\mathtt{xyzAxxxByzz}=\spe(Y)$.

\section{Parameterized suffix trays}

In this section, we propose a new indexing structure called the \emph{parameterized suffix tray} for PPM, and we discuss its space requirements.

Our parameterized suffix trays are a ``hybrid'' data structure
of parameterized suffix trees and parameterized suffix arrays,
which are defined as follows:

\begin{definition}[Parameterized suffix trees~\cite{Baker96}]
The \emph{parameterized suffix tree} for a p-string $T$, denoted $\PSTree(T)$, is a compact trie that stores the set $\{\prev(T[i:]) \mid 1 \leq i \leq |T|\}$ of the previous encodings of all suffixes of $T$.
\end{definition}
See Figure~\ref{fig:PSTree} for examples of $\PSTree(T)$.
We assume that the leaves of $\PSTree(T)$ are sorted in lexicographical order,
so that the sequence of the leaves corresponds to
the parameterized suffix array for $T$, which is defined below.

  \begin{figure}[htbp]
  \begin{center}
    \includegraphics[width=90mm]{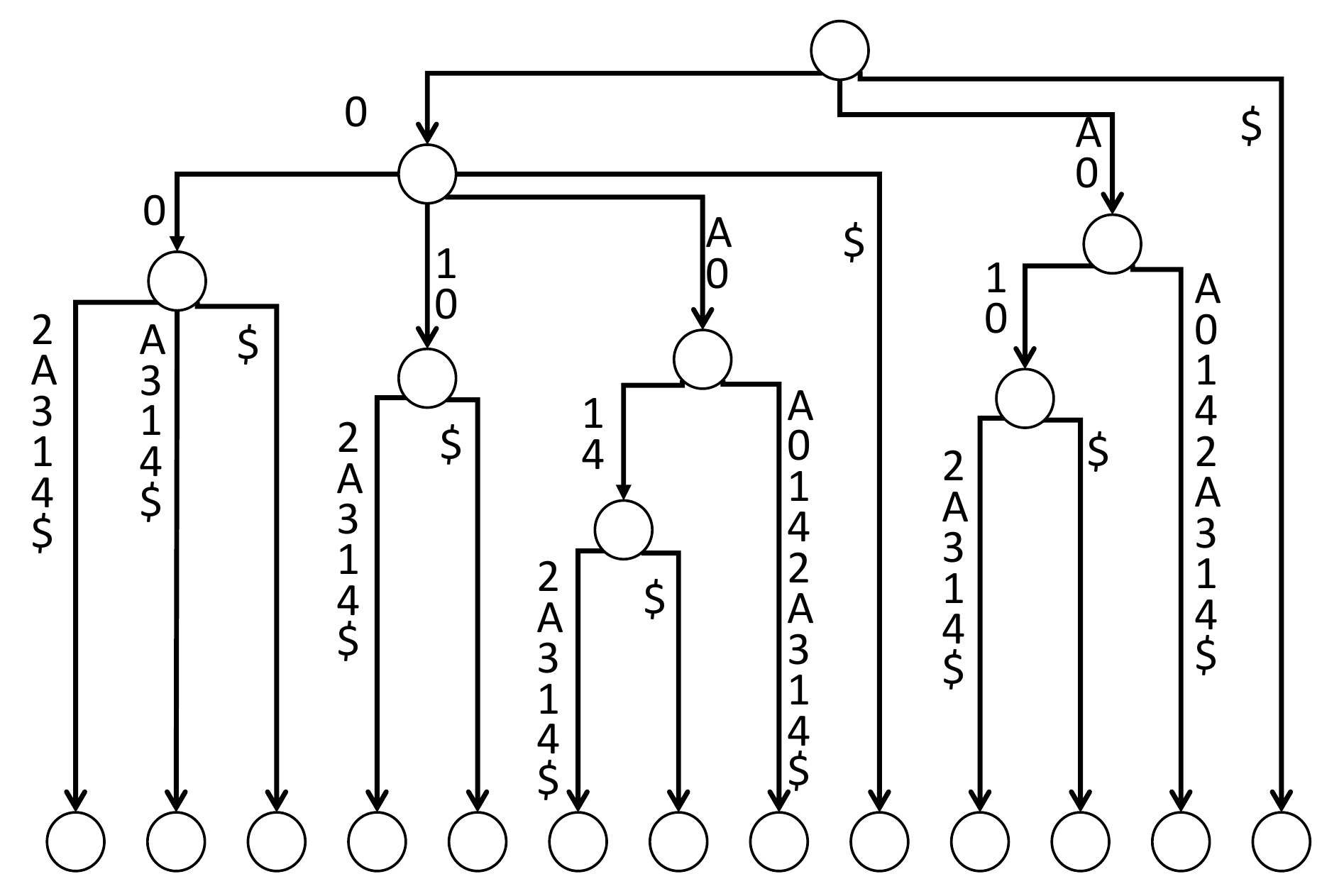}
  \end{center}
  \caption{$\PSTree(T)$ for a p-string $T=\mathtt{zAxAyyxyAxxy}$, where $\Sigma=\{\mathtt{A}, \mathtt{\$}\},\Pi=\{\mathtt{x},\mathtt{y}, \mathtt{z}\}$.}
  \label{fig:PSTree}
  \end{figure}

\begin{definition}[Parameterized suffix arrays~\cite{DeguchiHBIT08}]
The \emph{parameterized suffix array} of a p-string $T$, denoted $\PSA(T)$, is an array of integers such that $\PSA(T)[i] = j$ if and only if $\prev(T[j:])$ is the $i$th lexicographically smallest string in $\{ \prev(T[i:])\ |\ 1 \leq i \leq |T|\}$.
\end{definition}

\begin{definition}[Parameterized longest common prefix arrays~\cite{DeguchiHBIT08}]
  The parameterized longest common prefix array of a p-string $T$, denoted $\PLCP(T)$, is an array of integers such that $\PLCP(T)[1] = 0$ and $2 \leq i \leq |T|$ $\PLCP(T)[i]$ stores the length of the longest common prefix between $\prev(T[\PSA(T)[i-1]:])$ and $\prev(T[\PSA(T)[i]:])$.
\end{definition}
See Figure~\ref{fig:PSA_PLCP} for examples of $\PSA(T)$ and $\PLCP(T)$.

\begin{figure}
\begin{center}
\begin{tabular}{|c|c|l|c|}\hline
  $i$ & $\PSA(T)[i]$ & $T[\PSA(T)[i]:]$ & $\PLCP(i)$\\\hline
  1 & 6 & 0 0 2 $\mathtt{A}$ 3 1 4 \$ & 0\\\hline
  2 & 7 & 0 0 $\mathtt{A}$ 3 1 4 \$ & 2\\\hline
  3 & 11 & 0 0 \$ & 2\\\hline
  4 & 5 & 0 1 0 2 $\mathtt{A}$ 3 1 4 \$ & 1\\\hline
  5 & 10 & 0 1 0 \$ & 3\\\hline
  6 & 3 & 0 $\mathtt{A}$ 0 1 4 2 $\mathtt{A}$ 3 1 4 \$ & 1\\\hline
  7 & 8 & 0 $\mathtt{A}$ 0 1 4 \$ & 5\\\hline
  8 & 1 & 0 $\mathtt{A}$ 0 $\mathtt{A}$ 0 1 4 2 $\mathtt{A}$ 3 1 4 \$ & 3\\\hline
  9 & 12 & 0 \$ & 1\\\hline
  10 & 4 & $\mathtt{A}$ 0 1 0 2 $\mathtt{A}$ 3 1 4 \$ & 0\\\hline
  11 & 9 & $\mathtt{A}$ 0 1 0 \$ & 4\\\hline
  12 & 2 & $\mathtt{A}$ 0 $\mathtt{A}$ 0 1 4 2 $\mathtt{A}$ 3 1 4 \$ & 2\\\hline
  13 & 13 & \$ & 0\\\hline
  \end{tabular}
\end{center}
\caption{$\PSA(T)$ and $\PLCP(T)$ for a p-string $T=\mathtt{zAxAyyxyAxxy}$, where $\Sigma=\{\mathtt{A}, \mathtt{\$}\},\Pi=\{\mathtt{x},\mathtt{y}, \mathtt{z}\}$.}
  \label{fig:PSA_PLCP}
\end{figure}

In addition to the above data structures from the literature,
we introduce the following new notions and data structures.
For convenience, we will sometimes identify each node of
the parameterized suffix tree with the string
which is represented by that node.

In what follows,
let $\Pi_T = \{T[i] \in \Pi \mid 1 \leq i \leq |T|\}$
and $\Sigma_T = \{T[i] \in \Sigma \mid 1 \leq i \leq |T|\}$,
namely,
$\Pi_T$ (resp. $\Sigma_T$) is the set of distinct characters of $\Pi$ (resp. $\Sigma$) that occur in $T$.
Let $\pi = |\Pi_T|$ and $\sigma = |\Sigma_T|$.

\begin{definition}[P-nodes, branching p-nodes]
\label{pnode}
Let $T$ be a p-string over $\Sigma \cup \Pi$. A node $v$ in $\PSTree(T)$ is called a \emph{p-node} if the number of leaves in the subtree of $\PSTree(T)$ rooted at $v$ is at least $\max\{\sigma,\pi\}$.
A p-node $v$ is called a \emph{branching p-node} if at least two children of $v$ in $\PSTree(T)$ are p-nodes.
\end{definition}

See Figure~\ref{fig:p-node} for examples of p-nodes and branching p-nodes.

  \begin{figure}[htbp]
  \begin{center}
    \includegraphics[width=90mm]{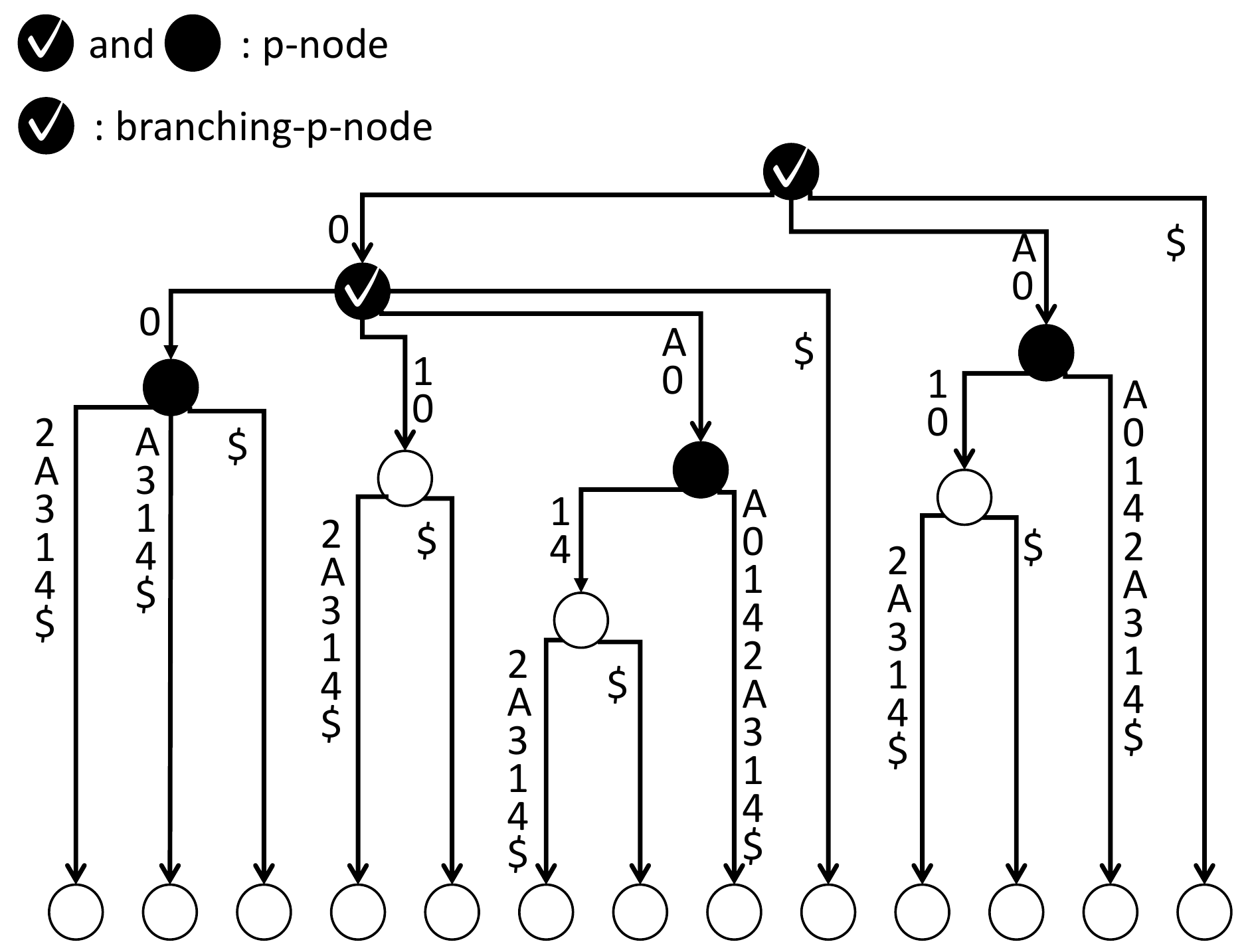}
  \end{center}
  \caption{$\PSTree(T)$ for a p-string $T=\mathtt{zAxAyyxyAxxy}$, where $\Sigma=\{\mathtt{A}, \mathtt{\$}\},\Pi=\{\mathtt{x},\mathtt{y}, \mathtt{z}\}$.Then black nodes are p-nodes because the number of leaves in the subtree of $\PSTree(T)$ rooted at them are at least $\max\{\sigma,\pi\}=3$. Checked nodes are branching p-nodes because at least two children of them in $\PSTree(T)$ are p-nodes.}
  \label{fig:p-node}
  \end{figure}

For any $x \in \Pi_T$, let $\rank_T(x)$ denote
the lexicographical rank of $x$ in $\Pi_T \cup \Sigma_T$.
Assuming that $\Pi$ and $\Sigma$ are integer alphabets of polynomial size in $n$,
we can compute $\rank_T(x)$ for every $x \in \Pi_T$ in $O(n)$ time
by bucket sort.
We will abbreviate $\rank_T(x)$ as $\rank(x)$ when it is not confusing.
  
\begin{definition}[P-array]
  Let $\prev(v)$ be any branching p-node of $\PSTree(T)$, where $v$ is some substring of $T$.
  The \emph{p-array} $A(\prev(v))$ for $\prev(v)$ is an array of length $\sigma+\pi$ such that
  for each $x \in \Sigma \cup \Pi$,
  $A(\prev(v))[\rank(x)]$ stores a pointer to the child $u$ of $\prev(v)$
  such that $\prev(\spe(v)x)$ is a prefix of $u$
  if such a child exists,
  and $A(\prev(v))[\rank(x)]$ stores nil otherwise.
\end{definition}

See Figure~\ref{fig:p-array} for an example of a p-array.

  \begin{figure}[htbp]
  \begin{center}
    \includegraphics[width=90mm]{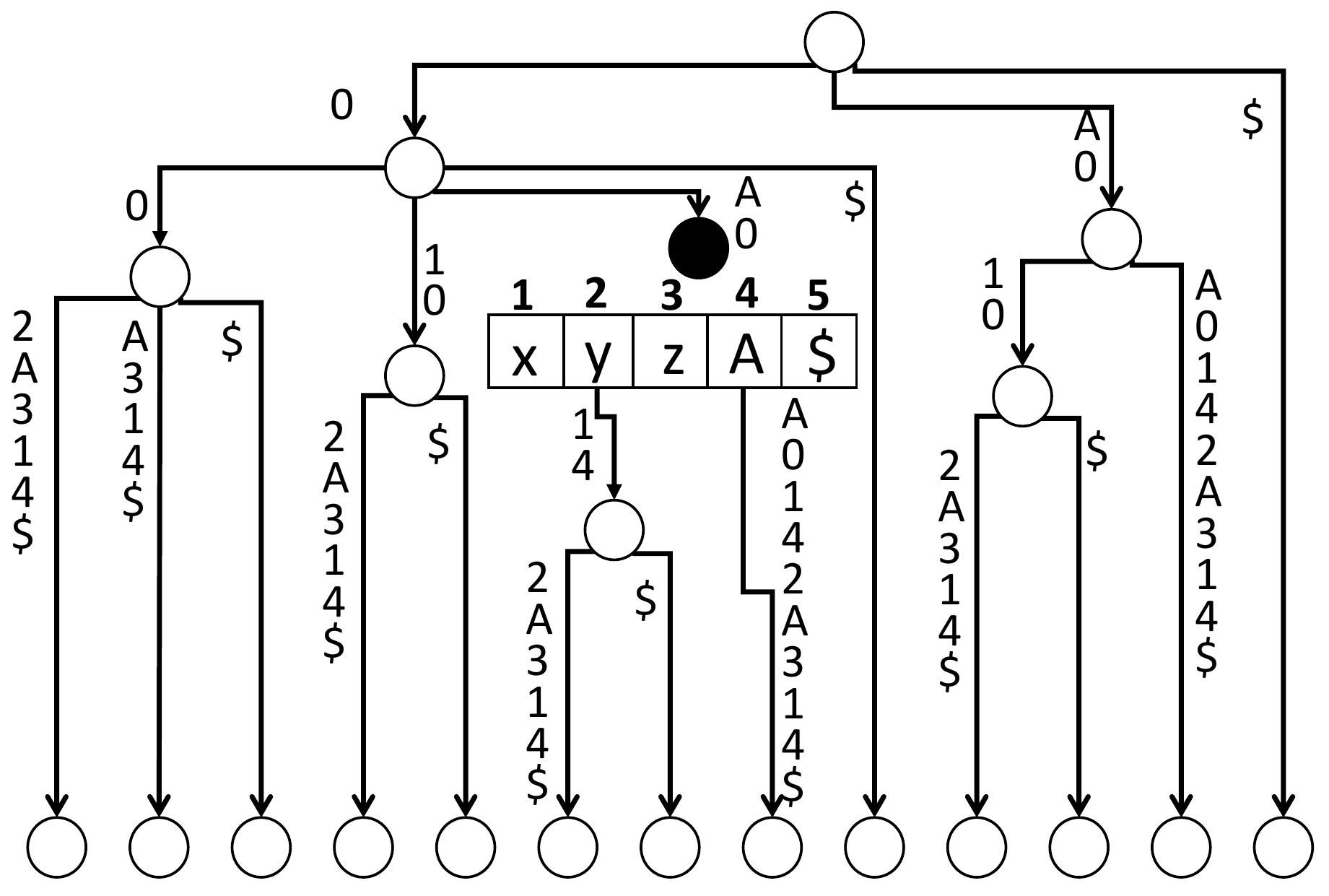}
  \end{center}
  \caption{$\PSTree(T)$ for a p-string $T=\mathtt{zAxAyyxyAxxy}$, where $\Sigma_T=\{\mathtt{A}, \mathtt{\$}\},\Pi_T=\{\mathtt{x},\mathtt{y}, \mathtt{z}\}$. Consider a branching p-node $\prev(v) = \mathtt{0A0}$ where $v$ is e.g. $\mathtt{zAx}$. Then, $A(\mathtt{0A0})[\rank(\mathtt{y})] = A(\mathtt{0A0})[2]$ stores a pointer to node $\mathtt{0A014}$ because $\spe(v)=\mathtt{xAy}$ and $\prev(\spe(v)\mathtt{y}=\mathtt{xAyy})=\mathtt{0A01}$ is a prefix of node $\mathtt{0A014}$.}
  \label{fig:p-array}
  \end{figure}

\begin{definition}[Parameterized suffix tray]
  The parameterized suffix tray of a p-string $T$, denoted $\PSTray(T)$, is a hybrid data structure consisting of $\PSA(T)$, $\PLCP(T)$, and $\PSTree(T)$ where each branching p-node is augmented with the p-array.
\end{definition}

We can show the following lemmas regarding the space requirements of
$\PSTray(T)$, by similar arguments to~\cite{ColeTL2015}
for suffix trays on standard strings.

\begin{lemma} 
For any p-string $T$ of length $n$ over $\Sigma \cup \Pi$,
the number of branching p-nodes in $\PSTree(T)$ is $O(\frac{n}{\pi+\sigma})$.
\end{lemma}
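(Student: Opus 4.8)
The plan is to bound the number of branching p-nodes by a standard charging argument on the tree, following the structure of the suffix tray analysis in~\cite{ColeTL2015}. First I would recall that $\PSTree(T)$ has exactly $n$ leaves, and that every internal node of a compact trie is branching (has at least two children). Set $k = \max\{\sigma, \pi\}$, so a p-node is a node whose subtree contains at least $k$ leaves, and a branching p-node is a p-node with at least two p-node children. The key observation is that the branching p-nodes, together with the root, form (the internal nodes of) a tree in a natural way: contract each maximal path of non-branching p-nodes. More precisely, define an auxiliary forest on the set of p-nodes where the parent of a p-node $v$ is its nearest proper ancestor that is a p-node; the branching p-nodes are exactly the nodes that have $\geq 2$ children in this auxiliary tree, i.e.\ its branching internal nodes.

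Next I would count the leaves of this auxiliary tree. A p-node $v$ is a leaf of the auxiliary tree iff none of its children in $\PSTree(T)$ is a p-node, i.e.\ every child subtree has fewer than $k$ leaves. I claim there are $O(n/k)$ such auxiliary-leaf p-nodes. Indeed, take the set $L$ of minimal p-nodes (p-nodes no descendant of which is a p-node — these are exactly the auxiliary leaves). Each $u \in L$ has $\geq k$ leaves of $\PSTree(T)$ below it, since it is a p-node. Moreover the subtrees rooted at distinct minimal p-nodes are leaf-disjoint: if one contained another they would not both be minimal, and in a tree two nodes are either ancestor-related or have disjoint subtrees. Hence $|L| \cdot k \le n$, giving $|L| = O(n/k) = O(n/\max\{\sigma,\pi\}) = O(n/(\sigma+\pi))$ up to the constant factor $2$.

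Finally, since the auxiliary tree is a tree in which every branching internal node has at least two children, the number of branching internal nodes is at most the number of leaves minus one, hence $O(n/(\sigma+\pi))$. This is exactly the number of branching p-nodes, completing the proof. I would also note the degenerate cases: if $k = 1$ the bound $O(n)$ is trivial, and if the root itself has fewer than $k$ leaves there are no p-nodes at all.

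The main obstacle — really the only subtlety — is making the contraction/auxiliary-tree argument rigorous: one must verify that "branching in $\PSTree(T)$ restricted to p-node children" behaves like "branching in the auxiliary tree," and that minimal p-nodes have pairwise leaf-disjoint subtrees, which is where the $\max\{\sigma,\pi\}$ threshold is actually used. Everything else is the routine "number of internal branching nodes $\le$ number of leaves" fact for trees. I expect the write-up to be short, essentially the three paragraphs above with the leaf-disjointness claim spelled out.
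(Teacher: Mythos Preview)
Your proposal is correct and is exactly the argument the paper has in mind: the paper does not spell out a proof but defers to the suffix tray analysis of~\cite{ColeTL2015}, which is precisely the contraction/minimal-p-node counting you describe. One small simplification you may wish to incorporate: because the p-node property is upward-closed (any ancestor of a p-node has at least as many leaves, hence is also a p-node), the ``nearest proper p-node ancestor'' of a non-root p-node is simply its parent in $\PSTree(T)$, so your auxiliary tree is literally the induced subtree on p-nodes and no separate contraction step is needed.
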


\begin{lemma} 
For any p-string $T$ of length $n$,
$\PSTray(T)$ occupies $O(n)$ space.
\end{lemma}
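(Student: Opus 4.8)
The plan is to account separately for the three constituents of $\PSTray(T)$: the arrays $\PSA(T)$ and $\PLCP(T)$, the tree $\PSTree(T)$ itself, and the collection of p-arrays attached to branching p-nodes. The first two pieces are immediate: $\PSA(T)$ and $\PLCP(T)$ are each integer arrays of length $n$, hence $O(n)$ space. For $\PSTree(T)$, recall that it is a compacted trie on the $n$ previous-encoded suffixes, so it has exactly $n$ leaves and at most $n-1$ internal branching nodes; storing each edge label as a pointer into $T$ (a starting position and length) gives $O(n)$ space as well. The only nontrivial term is the total size of all p-arrays.

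For the p-arrays, I would argue as follows. By Definition (P-array), each p-array has length exactly $\sigma + \pi$, and p-arrays are stored only at branching p-nodes. By Lemma 1, the number of branching p-nodes in $\PSTree(T)$ is $O\!\left(\frac{n}{\pi + \sigma}\right)$. Multiplying the per-node cost by the number of such nodes yields a total of
\[
O\!\left(\frac{n}{\pi + \sigma}\right) \cdot (\sigma + \pi) \;=\; O(n)
\]
space for all the p-arrays together. Summing the three contributions, $\PSTray(T)$ occupies $O(n)$ space overall.

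The substantive content has really been pushed into Lemma 1, so the main obstacle here is not in this lemma but in confirming that Lemma 1's bound interacts cleanly with the p-array length. The one subtlety I would make explicit is that the p-array length is $\sigma + \pi$ (the number of \emph{distinct} static and parameterized symbols occurring in $T$), not the potentially much larger ambient alphabet size; this is exactly what makes the product telescope to $O(n)$, and it is precisely the point flagged in the introduction where a na\"ive extension using the $\Theta(\sigma + n)$-sized alphabet of previous encodings would blow up to $O\!\left(\frac{n^2}{\sigma + \pi}\right)$. I would also note in passing that the $\rank_T$ values needed to index into the p-arrays are computed once in $O(n)$ time by bucket sort and stored in a single array of size $O(\pi)$, contributing nothing beyond the claimed bound. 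No heavy calculation is required; the argument is a direct multiplication of the two bounds from Lemma 1 and the p-array definition, plus the routine accounting for $\PSA(T)$, $\PLCP(T)$, and the tree skeleton.
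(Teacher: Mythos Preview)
Your proof is correct and is exactly the argument the paper intends: the paper does not spell out a proof of this lemma but simply says the space bounds follow ``by similar arguments to~\cite{ColeTL2015} for suffix trays on standard strings,'' and your accounting---$O(n)$ for $\PSA(T)$, $\PLCP(T)$, and the compacted trie, plus $O(n/(\sigma+\pi))\cdot(\sigma+\pi)=O(n)$ for the p-arrays via Lemma~1---is precisely that argument. The extra remarks you add about the $\rank_T$ array and the distinction between $\sigma+\pi$ and the $\Theta(\sigma+n)$ alphabet of previous encodings are accurate and helpful context, though not strictly needed for the bound.
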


\section{PPM using parameterized suffix trays}

In this section, we present our algorithm for parameterized pattern matching (PPM) on $\PSTray(T)$.
For any node $v$ in $\PSTray(T)$,
let $l_v=(i,j)$ denote the range of $\PSA(T)$ that $v$ corresponds,
namely, $l_v=(i,j)$ iff
the leftmost and rightmost leaves in the subtree rooted at $v$
correspond to the $i$th and $j$th entries of $\PSA(T)$,
respectively.
For any \emph{p-node} $v$, we store $l_v$ in $v$.
Also, for any \emph{non-branching} p-node $u$,
we store a pointer to the unique child of $u$ that is a p-node.
These can be easily computed in a total of $O(n)$ time
by a standard traversal on $\PSTree(T)$.

The basic strategy for PPM with $\PSTray(T)$ follows
the (exact) pattern matching algorithm with suffix trays on standard strings~\cite{ColeTL2015}.
Namely, we traverse $\PSTree(T)$ with a given pattern $P$ from the root,
and as soon as we encounter a node that is not a p-node,
then we switch to the corresponding range of $\PSA(T)$ and perform
a binary search to locate the pattern occurrences.
The details follow.

Let $P$ be a pattern p-string of length $m$.
We assume that $\Pi$ and $\Sigma$ are disjoint integer alphabets, where $\Pi = \{0,\ldots,c_1n\}$ and $\Sigma = \{c_1n +1,\ldots,n^{c_2}\}$ for some positive constants $c_1$ and $c_2$.
Using an array (bucket) $B$ of size $|\Pi| = c_1n \in O(n)$,
we can compute $\prev(P)$ in $O(m)$ time
by scanning $P$ from left to right and
keeping the last occurrence of each character $x \in \Pi$ in $P$ in $B[x]$.
We can compute $\spe(P)$ in $O(m)$ time in a similar manner with a bucket.
These buckets are a part of our indexing structure
that occupies $O(n)$ total space.

After computing $\prev(P)$ and $\spe(P)$, we traverse $\prev(P)$ on $\PSTray(T)$.
If $\prev(P[: i])$ for prefix $P[:i]$~($1\leq i \leq m$) is represented by a p-node, we can find the out-going edge whose label begins with $\prev(P)[i+1]$  in constant time by accessing the p-array entry $A(\prev(P[:i]))[\spe(P)[i + 1]]$. Therefore, we can solve PPM in $O(m + \occ)$ time if $\prev(P)$ is a prefix of some p-node in $\PSTray(T)$.
Otherwise (if $\prev(P)$ is not a prefix of any p-node),
there exists integer $i$ such that
$\prev(P[: i])$ is not a p-node but the parent of $\prev(P[: i])$ is a p-node.
In this case, we will use the next lemma.
\begin{lemma}[PPM in PSA range (adapted from~\cite{DeguchiHBIT08})]
  \label{psa}
  Given a pattern p-string $P$ of length $m$ and a range $[j,k]$ in $\PSA(T)$
  such that the $\occ$ occurrences of $P$ in $T$ lie in the range $[j,k]$ of $\PSA(T)$, we can find them in $O(m+\log (k-j)+\occ)$ time by using $\PSA(T)$ and $\PLCP(T)$.
\end{lemma}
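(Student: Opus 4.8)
### Proof Plan for Lemma~\ref{psa}

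The plan is to adapt the classical binary-search-on-suffix-array technique of Manber and Myers, in the parameterized form given by Deguchi et al.~\cite{DeguchiHBIT08}, but with the subtlety that here we are \emph{already} handed a sub-range $[j,k]$ rather than searching the whole array $[1,n]$. First I would recall that the $\occ$ p-occurrences of $P$ in $T$ correspond exactly to those indices $\ell$ with $j \le \ell \le k$ such that $\prev(P)$ is a prefix of $\prev(T[\PSA(T)[\ell]:])$, and that, since the entries of $\PSA(T)$ are sorted by the lexicographic order of their previous encodings, these indices form a contiguous subinterval $[j',k'] \subseteq [j,k]$. So the task reduces to (i) computing $\prev(P)$, which takes $O(m)$ time by the bucket method described just above the lemma, (ii) locating $j'$ and $k'$ by two binary searches inside $[j,k]$, and (iii) reporting $\PSA(T)[j'], \ldots, \PSA(T)[k']$, which is $O(\occ)$ time once $j'$ and $k'$ are known.

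The heart of the argument is step (ii), and the key point that makes it run in $O(m + \log(k-j))$ rather than $O(m\log(k-j))$ is the standard Manber--Myers amortization using the LCP information. Concretely, during the binary search I maintain the invariant that I know $\mathrm{lcp}$ values $l = \mathrm{lcp}(\prev(P), \prev(T[\PSA(T)[\mathit{lo}]:]))$ and $r = \mathrm{lcp}(\prev(P), \prev(T[\PSA(T)[\mathit{hi}]:]))$ for the current left and right endpoints of the search window. When I probe the midpoint $\mathit{mid}$, I use $\PLCP(T)$ together with a range-minimum structure on it (built in $O(n)$ time, part of the index) to obtain $\mathrm{lcp}$ of the two suffixes at $\mathit{lo}$ and $\mathit{hi}$; comparing this with $\max\{l,r\}$ lets me start the character-by-character comparison of $\prev(P)$ against $\prev(T[\PSA(T)[\mathit{mid}]:])$ from position $\max\{l,r\}+1$ instead of from position $1$. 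Because each character comparison that \emph{succeeds} advances a pointer into $\prev(P)$ that never moves backward across the whole binary search, the total comparison cost telescopes to $O(m + \log(k-j))$: the $\log(k-j)$ comes from the $O(1)$ failed comparison charged to each of the $O(\log(k-j))$ probes, and the $O(m)$ bounds the total number of successful ones. One small check I would make explicit is that this amortization survives being restricted to the given window $[j,k]$; it does, because the initial window endpoints $j$ and $k$ only give us a head start (we may simply initialize $l = r = 0$, or compute them in $O(m)$ up front), and the monotonicity of the lcp-pointer argument is unaffected by the choice of initial interval.

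The one genuine subtlety worth flagging is that we must be careful about \emph{which} encoding governs the comparison. A pattern $P$ p-matches $T[i:i+m-1]$ iff $\prev(P) = \prev(T[i:i+m-1])$, and $\prev(T[i:i+m-1])$ is \emph{not} in general a prefix of $\prev(T[i:])$ — the previous-encoding of a longer suffix can differ from that of its prefix at positions where a parameter character's previous occurrence lies beyond position $m$. However, the standard fact (used already in~\cite{DeguchiHBIT08,Baker96}) is that $\prev(P)$ being a prefix of $\prev(T[i:])$ is in fact equivalent to $P \approx T[i:i+m-1]$: whenever $\prev(T[i:])[t] \ge m-t+1$ for some $t \le m$ (so the two encodings could disagree), the corresponding position of $\prev(P)$ is $0$, which is strictly smaller, so such a suffix is simply excluded by the prefix test rather than spuriously included. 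I would state this as the correctness invariant of the search and cite it to the prior work; with that in hand, the contiguity of $[j',k']$ and the running-time bound follow as above, completing the proof. $\qed$
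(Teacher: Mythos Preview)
The paper itself gives no proof of this lemma: it is stated as ``adapted from~\cite{DeguchiHBIT08}'' and used as a black box, so there is nothing to compare your argument against. Your plan---Manber--Myers LCP-accelerated binary search restricted to the window $[j,k]$, with the per-probe LCP values obtained via an RMQ structure on $\PLCP(T)$---is the standard argument behind that citation and is correct in outline.

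One genuine error to fix: your final paragraph is confused. The previous encoding is prefix-closed, because $\prev(w)[t]$ depends only on $w[:t]$; hence $\prev(T[i:i+m-1])$ is \emph{always} exactly the length-$m$ prefix of $\prev(T[i:])$, contrary to what you assert. The equivalence ``$P \approx T[i:i+m-1]$ iff $\prev(P)$ is a prefix of $\prev(T[i:])$'' is therefore immediate, with no subtlety to flag and no spurious positions to exclude. (You may be conflating this with the fact that $\prev$ is \emph{not} suffix-closed---removing the leading character can turn a positive distance entry into a $0$---which is what makes suffix links in $\PSTree(T)$ delicate, but that phenomenon plays no role here.)

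A minor imprecision in step~(ii): at each probe the LCP you need is between the suffix at $\mathit{mid}$ and whichever of $\mathit{lo},\mathit{hi}$ currently attains $\max\{l,r\}$, not the LCP between $\mathit{lo}$ and $\mathit{hi}$. It is the three-way case split on that value versus $\max\{l,r\}$ that either resolves the probe with no character comparisons or lets you resume matching from position $\max\{l,r\}+1$; without it you only get the simple accelerant and an $O(m\log(k-j))$ bound, not the claimed $O(m+\log(k-j))$.
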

Let $I_{\prev(P[:i])}=(j,k)$ denote the range in $\PSA(T)$
where $\prev(P)$ is a prefix of the suffixes in the range.
We apply Lemma~\ref{psa} to this range so we can find the
parameterized occurrences of $P$ in $T$ in $O(m + \log(k-j) + \occ)$ time.
By Definition~\ref{pnode} we have $k-j \leq \pi+\sigma$
(recall that $\prev(P[: i])$ is not a p-node).
Thus, $O(m + \log(k-j) + \occ) \subseteq O(m + \log(\pi+\sigma) + \occ)$,
implying the next theorem.

\begin{theorem}
  Suppose $|\Pi| = O(n)$.
  Then, $\PSTray(T)$ supports PPM queries in $O(m + \log(\pi+\sigma) + \occ)$ time each, where $m$ is the length of a query pattern $P$ and $\occ$ is the number of occurrences to report.
\end{theorem}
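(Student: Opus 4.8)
The plan is to assemble the theorem directly from the machinery already set up in the section, since the query algorithm has essentially been described in the preceding paragraphs; what remains is to organize the case analysis and verify the time bound in each case. First I would fix a query pattern $P$ of length $m$ and, as a preprocessing step within the query, compute both $\prev(P)$ and $\spe(P)$ in $O(m)$ time each using the $O(n)$-space buckets indexed by $\Pi = \{0,\ldots,c_1 n\}$, exactly as noted before Lemma~\ref{psa}. Then I would traverse $\PSTree(T)$ from the root, following edges according to $\prev(P)$: at a node that is currently a p-node and is \emph{branching}, the next edge is found in $O(1)$ time via the p-array lookup $A(\prev(P[:i]))[\rank(\spe(P)[i+1])]$; at a \emph{non-branching} p-node there is at most one child that is a p-node, and the stored pointer to that unique p-node child lets us continue in $O(1)$ time as well (after a constant-time check that the edge label is consistent with $\prev(P)$). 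So the traversal through p-nodes costs $O(1)$ per character, i.e.\ $O(m)$ overall.

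Next I would split into the two cases. \textbf{Case 1:} the entire traversal of $\prev(P)$ stays within p-nodes, i.e.\ $\prev(P)$ is a prefix of (the string spelled by) some p-node, or more precisely the traversal does not fall off the tree and never needs to pass through a non-p-node. Then after reading all of $\prev(P)$ we are at a node (or on an edge into a node) $v$ whose subtree leaves give exactly the p-matching occurrences of $P$; since $v$ is a p-node we have stored $l_v = (i,j)$, and we report the $\occ$ occurrences by scanning $\PSA(T)[i..j]$ in $O(\occ)$ time. Total: $O(m + \occ)$. \textbf{Case 2:} during the traversal we reach, for some $i$, a node $\prev(P[:i])$ that is \emph{not} a p-node while its parent \emph{is} a p-node (this is well-defined because the root is a p-node as long as $n \ge \max\{\sigma,\pi\}$, and once we leave the p-nodes we cannot return). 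At that point the set of suffixes of $T$ having $\prev(P[:i])$ as a prefix occupies a contiguous range $I_{\prev(P[:i])} = (j,k)$ of $\PSA(T)$, which we obtain from the parent p-node's stored range together with the p-array (or by a constant-time lookup at the parent); by Definition~\ref{pnode}, since $\prev(P[:i])$ is not a p-node its subtree has fewer than $\max\{\sigma,\pi\} \le \sigma+\pi$ leaves, so $k - j < \sigma + \pi$. All $\occ$ occurrences of $P$ lie in this range, so Lemma~\ref{psa} applies and finds them in $O(m + \log(k-j) + \occ) \subseteq O(m + \log(\sigma+\pi) + \occ)$ time using $\PSA(T)$ and $\PLCP(T)$.

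Combining the two cases, both are bounded by $O(m + \log(\sigma+\pi) + \occ)$, which proves the theorem. I do not expect a serious obstacle here, as the statement is essentially a packaging of results already established; the one point that needs care is the bookkeeping at the transition from the tree part to the array part, namely making sure that the range $(j,k)$ handed to Lemma~\ref{psa} is both (i) correct, containing exactly the suffixes prefixed by $\prev(P[:i])$, and (ii) obtained in $O(1)$ time from information already stored at the parent p-node and its p-array. A secondary point is verifying that the root is a p-node so that Case~2 is truly the only alternative to Case~1 — this holds whenever $n \ge \max\{\sigma,\pi\}$, which is automatic since $\Sigma_T, \Pi_T$ are sets of characters occurring in $T$; the degenerate tiny-alphabet situation can be folded into Case~1 or handled by a direct $O(m+\occ)$ scan of $\PSA(T)$.
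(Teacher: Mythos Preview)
Your proposal is correct and follows essentially the same route as the paper: precompute $\prev(P)$ and $\spe(P)$ in $O(m)$ time, traverse p-nodes using the p-arrays (and the stored pointer at non-branching p-nodes) at $O(1)$ per character, and upon leaving the p-nodes hand the resulting $\PSA$ range of size at most $\sigma+\pi$ to Lemma~\ref{psa}. The point you flag about obtaining the range $(j,k)$ in $O(1)$ time at the transition is indeed the only place requiring care, and the paper treats it at the same level of detail you do.
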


\section{Construction of parameterized suffix trays}

Let $T$ be a p-string of length $n$.
In this section, we show how to construct $\PSTray(T)$
provided that $\PSTree(T)$ has already been built.
Throughout this section
we assume that $\Pi$ and $\Sigma$ are disjoint integer alphabets,
both being of polynomial size in $n$, namely,
$\Pi = \{0,...,n^{c_1}\}$ and $\Sigma = \{n^{c_1} +1,...,n^{c_2}\}$ for some positive constants $c_1$ and $c_2$.
For convenience, we define the following two notions.

\begin{definition}[P-function]
  \label{function}
  Let $q,r$ be p-strings such that $q \approx r$.
  The \emph{p-function} $\f_{q,r} : \Sigma\cup\Pi \rightarrow \Sigma\cup\Pi$
  transforms $q$ to $r$, namely, for every $1 \leq i \leq h$
  \[
    \f_{q,r}(q[i])=r[i].
  \]
\end{definition}

For instance, if $\Pi =\{\mathtt{x},\mathtt{y},\mathtt{z}\}$, $q=\mathtt{xyxzyyxz}$, $r=\mathtt{zxzyxxzy}$ and $q\approx r$, then $\f_{q,r}(\mathtt{x})=\mathtt{z}$, $\f_{q,r}(\mathtt{y})=\mathtt{x}$, $\f_{q,r}(\mathtt{z})=\mathtt{y}$ since $q$ can be transformed $r$ by this function.

\begin{definition}[F-array]
  Let $q$ be a p-string and $x \in \Pi_T$. The first (left-most) occurrence of $x$ in $q$ is denoted by $i_{q,x}$.
  The \emph{f-array} of $q$, denoted $\fpos(q)$, is an array of length $\pi$ such that $\fpos(q)[\rank(x)] = i_{q,x}$.
\end{definition}

For instance, if $\Pi_T =\{\mathtt{x},\mathtt{y},\mathtt{z}\}$ and $q=\mathtt{xyxzyyxz}$, then $\fpos(q)[\rank(\mathtt{x})] = \fpos(q)[1] =1$, $\fpos(q)[\rank(\mathtt{y})] = \fpos(q)[2] = 2$, and $\fpos(q)[\rank(\mathtt{z})] = \fpos(q)[3] \\ = 4$.

Given $\PSTree(T)$, we show how to construct $\PSTray(T)$. It is well known that $\PSA(T)$ and $\PLCP(T)$ can be constructed from $\PSTree(T)$ in $O(n)$ time.
In the following, we consider how to compute $A(\prev(v))$ for every p-node $\prev(v)$ in $\PSTree(T)$.

First, we consider how to compute (branching) p-nodes in $\PSTree(T)$. This can be done by a similar method to the suffix tray for exact matching~\cite{ColeTL2015}, namely:
\begin{lemma}[Computing p-node]
  We can compute all p-nodes and branching p-nodes in $\PSTree(T)$ in $O(n)$ total time.
\end{lemma}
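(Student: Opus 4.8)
The plan is to mimic the suffix-tray construction of Cole, Gottlieb, and Lewenstein (here cited as~\cite{ColeTL2015}), adapting it to $\PSTree(T)$. First I would compute, for every node $v$ of $\PSTree(T)$, the number $\ell(v)$ of leaves in its subtree. This is a single bottom-up traversal: $\ell(v)=1$ for a leaf, and $\ell(v)=\sum_{u\text{ child of }v}\ell(u)$ otherwise. Since $\PSTree(T)$ has $O(n)$ nodes and the total number of edges is $O(n)$, this takes $O(n)$ time. We also precompute $\max\{\sigma,\pi\}$ in $O(n)$ time (e.g.\ by the bucket sort already used to compute $\rank_T$). Then, by Definition~\ref{pnode}, a node $v$ is a p-node iff $\ell(v)\geq\max\{\sigma,\pi\}$, so marking all p-nodes is a single pass over the nodes, again $O(n)$ time.

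Next I would identify the branching p-nodes. By definition a p-node $v$ is branching iff at least two of its children in $\PSTree(T)$ are p-nodes. Having already marked the p-nodes, I would do one more traversal: for each p-node $v$, scan its children (each edge of $\PSTree(T)$ is visited once overall) and count how many are p-nodes; if the count is at least $2$, mark $v$ as branching. The total work is proportional to the number of edges, hence $O(n)$. While doing this I would also, for each non-branching p-node $u$, record a pointer to its unique p-node child (which exists because $u$ being a p-node means $\ell(u)\geq\max\{\sigma,\pi\}\geq 1$, so $u$ is internal and some child carries enough leaves), matching the auxiliary pointers described just before Lemma~\ref{psa}; this too costs $O(n)$.

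The only subtlety — and the step I would flag as the main obstacle — is being careful that ``p-node child'' is interpreted with respect to $\PSTree(T)$ itself, i.e.\ over the branching structure of the compact trie, rather than over some implicit character-by-character expansion; a child here is the endpoint of a single (possibly long) edge of $\PSTree(T)$. Once that is fixed, the counting argument is immediate and the $O(n)$ bound falls out from the linear size of the compact trie. I would also remark that the analogue of the suffix-tray counting lemma (here the unnumbered lemma stating there are $O(n/(\pi+\sigma))$ branching p-nodes) is not needed for the \emph{time} bound of this construction lemma — it only guarantees the \emph{space} bound of the p-arrays — so the proof of this lemma is purely a matter of three linear-time tree traversals: leaf counting, p-node marking, and branching-p-node marking (together with the non-branching p-node child pointers). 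Summing these, the total time is $O(n)$, as claimed.
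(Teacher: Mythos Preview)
Your argument is correct and is precisely the approach the paper intends: the paper gives no proof here, simply pointing to the suffix-tray construction of~\cite{ColeTL2015}, and your three linear traversals (leaf counting, thresholding against $\max\{\sigma,\pi\}$, and counting p-node children) are exactly that method transplanted to $\PSTree(T)$.

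One small inaccuracy in your aside: it is \emph{not} true that every non-branching p-node $u$ has a p-node child. If $\ell(u)=\max\{\sigma,\pi\}=k$ and $u$ has $k$ leaf children, then $u$ is a p-node with zero p-node children; more generally, the ``bottom'' p-nodes in the tree of p-nodes need not have any p-node child. Your justification ``$u$ is internal and some child carries enough leaves'' does not follow. This does not affect the lemma itself (the pointer is auxiliary data discussed elsewhere, and one simply stores \textsf{nil} when no such child exists), but you should drop or correct that parenthetical.
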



Our algorithm performs a bottom-up traversal on $\PSTree(T)$
and propagates pairs $(\fpos(T[i:]), i)$ from leaves to their ancestors.
Each internal p-node $\prev(v)$ will store only a single pair
$(\fpos(T[i:]), i)$, where
$i$ is the largest position in $T$
such that $\prev(T[i:])$ is a leaf in the subtree rooted at $\prev(v)$
\footnote{Indeed, our $\PSTray(T)$ construction algorithm works with any position $i$ in the subtree rooted at $\prev(v)$, and we propagate the largest leaf position $i$ to each internal p-node for simplicity.}.
See also Figure~\ref{fig:fpos}.
One can easily compute the pairs for all p-nodes in a total of $O(n)$ time.
\begin{figure}[htbp]
  \begin{center}
    \includegraphics[width=110mm]{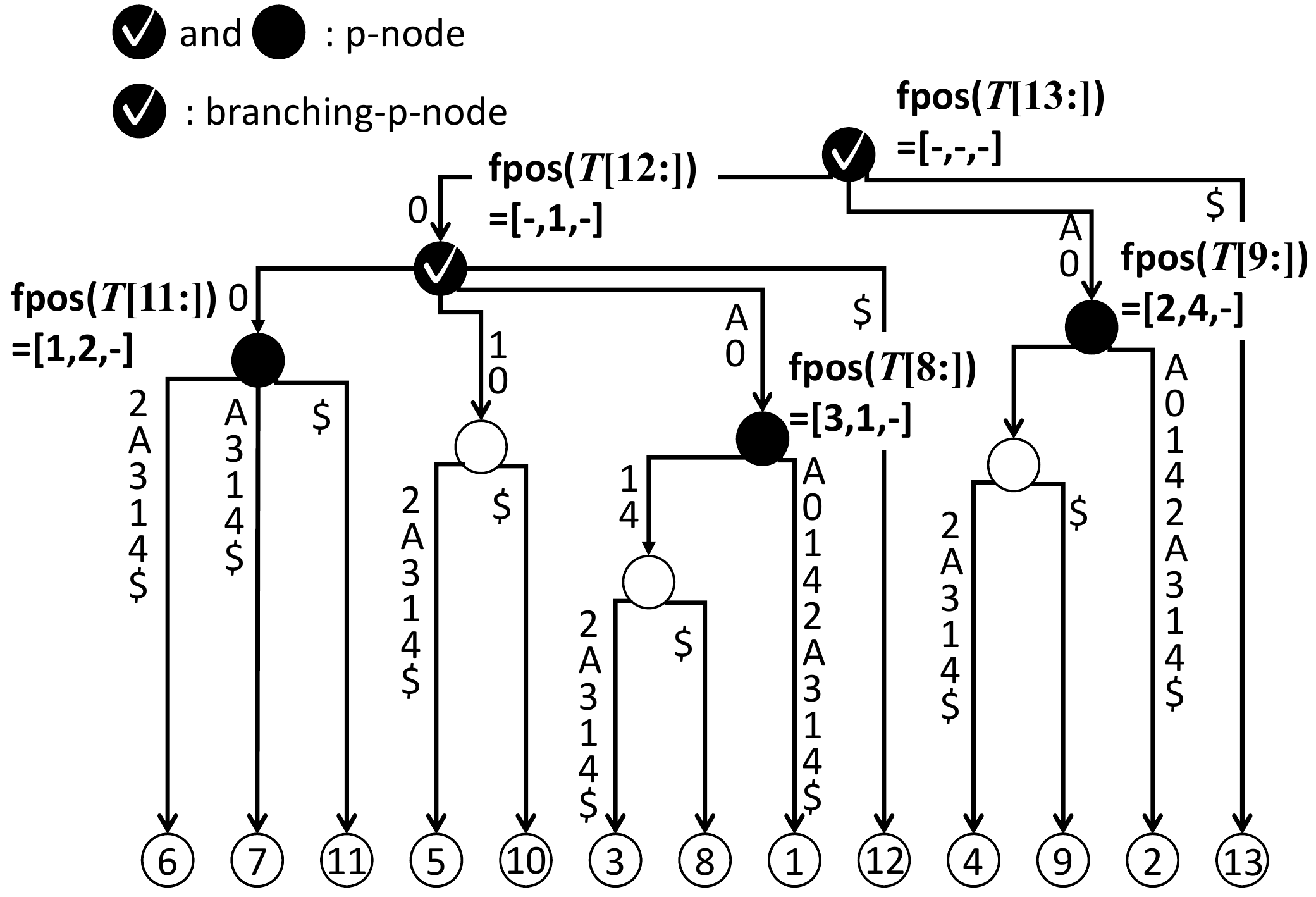}
  \end{center}
  \caption{$\PSTree(T)$ for a p-string $T=\mathtt{zAxAyyxyAxxy}$, where $\Sigma=\{\mathtt{A}, \mathtt{\$}\}$ and $\Pi=\{\mathtt{x},\mathtt{y}, \mathtt{z}\}$. For instance, we propagate $\fpos(T[12:])$ (coupled with the corresponding position $12$) to the p-node $\mathtt{0}$.}
  \label{fig:fpos}
\end{figure}
Then, we compute $\f_{v,\spe(v)}$ for every p-node $\prev(v)$ from
the pair $(\fpos(T[i:]), i)$ that is stored in the p-node $\prev(v)$.
Finally, for every p-node $\prev(v)$ we compute $A_{\prev(v)}$ from $\f_{v,\spe(v)}$ and $i$.

In what follows, we first show how to compute $A_{\prev(v)}$ from $\f_{v,\spe(v)}$ and $i$ in Lemmas~\ref{lem:spe_equality} and ~\ref{lem:f_to_spe}.
We then present how to compute $\f_{v,\spe(v)}$ and $i$ from $\fpos(T[i:])$ in Lemma~\ref{pitime},
and how to compute $\fpos(T[i:])$ in Lemma~\ref{newfpos}.
These lemmas will ensure the correctness and time complexity of our algorithm.

We consider how to compute $A(\prev(v))$ for a given p-node $\prev(v)$.

\begin{lemma} \label{lem:spe_equality}
  Let $s$ be a p-string. If $\prev(s)[|s|] = k \in \{0, \ldots, |T|-1\}$, then
  $\spe(s)[|s|] = \spe(s)[|s|-k]$.
\end{lemma}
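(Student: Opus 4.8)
### Proof Plan for Lemma~\ref{lem:spe_equality}

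The plan is to unwind the definitions of $\prev$ and $\spe$ and reason about the last character of $s$. By Definition~\ref{prev}, the condition $\prev(s)[|s|] = k$ with $k \in \{1, \ldots, |T|-1\}$ means that $s[|s|] \in \Pi$, that $s[|s|] = s[|s|-k]$, and that $s[|s|] \neq s[j]$ for all $|s|-k < j < |s|$; in other words, the nearest earlier occurrence of the parameter character $s[|s|]$ is exactly $k$ positions to the left. (The case $k = 0$ is treated separately, but it is worth checking whether the statement even intends to cover it — if $\prev(s)[|s|]=0$ then $s[|s|]$ is a parameter not occurring earlier, and $\spe(s)[|s|] = \spe(s)[|s|-0] = \spe(s)[|s|]$ holds trivially; I would note this and move on.) So the substantive content is: when $s[|s|]$ p-matches (equals) $s[|s|-k]$, the same coincidence survives under the canonical renaming that produces $\spe(s)$.

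Next I would invoke the fundamental equivalence stated in the Preliminaries: for any p-strings $w_1, w_2$, we have $w_1 \approx w_2$ iff $\prev(w_1) = \prev(w_2)$ iff $\spe(w_1) = \spe(w_2)$. Since $s \approx \spe(s)$ by Definition~\ref{spe}, we get $\prev(s) = \prev(\spe(s))$. Therefore $\prev(\spe(s))[|s|] = k$ as well. Now I apply the definition of $\prev$ to the string $\spe(s)$: the entry $\prev(\spe(s))[|s|] = k$ with $k \geq 1$ forces, by the ``otherwise'' branch of Definition~\ref{prev}, that $\spe(s)[|s|] \in \Pi$ and $\spe(s)[|s|] = \spe(s)[|s|-k]$. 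This is precisely the desired conclusion $\spe(s)[|s|] = \spe(s)[|s|-k]$.

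I expect this lemma to be essentially a definitional bookkeeping exercise rather than one with a genuine obstacle; the only care needed is (i) handling the boundary value $k=0$ cleanly (and confirming the intended range — the statement writes $k \in \{0,\ldots,|T|-1\}$, so I would explicitly dispatch $k=0$ first, where both sides are literally the same position), and (ii) making sure the well-definedness of $\prev$ on $s$ versus on $\spe(s)$ is tied together correctly — that is, being explicit that $\prev$ is invariant under p-matching so that the numeric entry $k$ transfers verbatim from $\prev(s)$ to $\prev(\spe(s))$. Once those two points are pinned down, reading off the conclusion from the ``otherwise'' case of Definition~\ref{prev} applied to $\spe(s)$ is immediate. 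No machinery beyond the definitions of $\prev$ and $\spe$ and the stated equivalence $w_1 \approx w_2 \Leftrightarrow \prev(w_1) = \prev(w_2) \Leftrightarrow \spe(w_1) = \spe(w_2)$ is required.
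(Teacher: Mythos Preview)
Your proposal is correct and follows exactly the approach the paper intends: the paper's own proof is simply ``Clear from the definitions of $\prev(\cdot)$ and $\spe(\cdot)$,'' and your write-up is a faithful, more detailed unpacking of that one-line justification. The key step you identify---transferring the value $k$ from $\prev(s)$ to $\prev(\spe(s))$ via $s \approx \spe(s)$ and then reading off $\spe(s)[|s|] = \spe(s)[|s|-k]$ from the ``otherwise'' branch of Definition~\ref{prev}---is precisely what ``clear from the definitions'' is gesturing at.
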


\begin{proof}
  Clear from the definitions of $\prev(\cdot)$ and $\spe(\cdot)$.
\end{proof}

In the sequel,
let $\prev(T[i:])$ be any leaf in the subtree rooted at $\prev(v)$, where $1 \leq i \leq |T|$.
By Lemma~\ref{lem:spe_equality},
we can compute $A(\prev(v))$ if we know $\spe(v)[|v|-k + 1]$,
where $k = \prev(T[i:])[|v|+1]$.

\begin{lemma}\label{lem:f_to_spe}
  $\spe(v)[|v|-k+1]=\f_{T[i:i+|v|-1],\spe(T[i:i+|v|-1])T[i+|v|+k-2]}$.
\end{lemma}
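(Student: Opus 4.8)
The plan is to unwind the definitions on both sides and check they describe the same character. Fix $i$ with $\prev(T[i:])$ a leaf below $\prev(v)$, write $w = T[i:i+|v|-1]$ (so $w \approx v$, hence $\prev(w) = \prev(v)$ and $\spe(w) = \spe(v)$), and set $k = \prev(T[i:])[|v|+1] = \prev(w\,T[i+|v|])[|v|+1]$. The statement to prove can be rephrased as: the character $\spe(v)[\,|v|-k+1\,]$ equals the image under the p-function $\f_{w,\spe(w)}$ of the character $T[i+|v|+k-2]$, where the subscript of $\f$ in the lemma should be read as $\f_{w,\ \spe(w)\cdot T[i+|v|+k-2]}$, i.e. the p-function associated with the p-match $w \approx \spe(w)$ evaluated at the ``next'' text character $T[i+|v|+k-2] = w[\,|v|-k+2\,]$ (this indexing is what the end-marker position bookkeeping gives). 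So the real content is just: $\spe(v)[\,|v|-k+1\,] = \f_{w,\spe(w)}\bigl(w[\,|v|-k+1\,]\bigr)$.

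First I would record why position $|v|-k+1$ is the relevant one. Since $k = \prev(T[i:])[|v|+1]$, Lemma~\ref{lem:spe_equality} (applied to $s = T[i:i+|v|]$, whose last $\prev$-entry is $k$) tells us $\spe(s)[\,|s|\,] = \spe(s)[\,|s|-k\,]$, i.e. the character appended after $v$ to form this child equals the character of $\spe(v)$ at position $|v|+1-k$. Thus $\spe(v)[\,|v|-k+1\,]$ is exactly the symbol we need to index the p-array $A(\prev(v))$ at, which is the whole point of computing it. The character $w[\,|v|-k+1\,] = T[i+|v|-k]$ is the corresponding symbol inside the concrete text occurrence $w$ of $v$.

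Next I would invoke the defining property of the p-function: by Definition~\ref{function}, $\f_{w,\spe(w)}$ is the (well-defined, since $w \approx \spe(w)$) renaming bijection with $\f_{w,\spe(w)}(w[j]) = \spe(w)[j]$ for every $1 \le j \le |w|$. Taking $j = |v|-k+1$ gives immediately $\f_{w,\spe(w)}\bigl(w[\,|v|-k+1\,]\bigr) = \spe(w)[\,|v|-k+1\,] = \spe(v)[\,|v|-k+1\,]$, since $\spe(w) = \spe(v)$. That is the claimed identity. The only subtlety is that the lemma states the p-function with the extended second argument $\spe(w)\cdot T[i+|v|+k-2]$ rather than $\spe(w)$; I would note that $T[i+|v|+k-2] = w[\,|v|-k+2\,]$ when $|v|-k+2 \le |v|$, so extending $w$ and $\spe(w)$ by one matching character keeps them a p-matching pair and does not change the value of $\f$ on symbols already occurring in $w$ — in particular on $w[\,|v|-k+1\,]$. (When $k$ is such that the ``next'' character is the end-marker $\$\in\Sigma$, the p-function fixes it anyway, so nothing changes.)

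The main obstacle is purely one of bookkeeping: pinning down the off-by-one indices so that ``$T[i+|v|+k-2]$'' in the statement really is the symbol at position $|v|-k+2$ of $w$ (equivalently, the second occurrence counterpart of the position $|v|-k+1$ that $k$ points back to), and making sure the requirement $k \in \{0,\dots,|T|-1\}$ of Lemma~\ref{lem:spe_equality} is met so that $|v|-k+1 \ge 1$. Once those indices are aligned, the proof is a direct two-line application of Definition~\ref{function} together with $\prev(w)=\prev(v)\Rightarrow\spe(w)=\spe(v)$ and Lemma~\ref{lem:spe_equality}.
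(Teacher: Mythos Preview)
Your core argument is exactly the paper's: the proof in the paper is the single line ``Clear from the definitions of $\f_{\cdot,\cdot}$ and $\spe(\cdot)$,'' and that is precisely your observation that $\spe(v)[|v|-k+1]=\spe(w)[|v|-k+1]=\f_{w,\spe(w)}(w[|v|-k+1])$ with $w=T[i:i+|v|-1]$ and $\spe(w)=\spe(v)$.

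One concrete slip in your bookkeeping: you assert $T[i+|v|+k-2]=w[\,|v|-k+2\,]$, but since $w[j]=T[i+j-1]$ this would require $j=|v|+k-1$, not $|v|-k+2$; your equation holds only for the meaningless value $k=3/2$. The character you actually want (and use in your main line) is $w[\,|v|-k+1\,]=T[i+|v|-k]$, not $T[i+|v|+k-2]$. In other words, the discrepancy you tried to explain away by ``extending $w$ and $\spe(w)$ by one matching character'' is not resolvable by that device; the index $i+|v|+k-2$ in the displayed statement appears to be a typographical error for $i+|v|-k$, and your correct derivation of the substance confirms this. Drop the paragraph about the extended second argument and simply state the identity with the right index.
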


\begin{proof}
  Clear from the definitions of $\f_{\cdot,\cdot}$ and $\spe(\cdot)$.
\end{proof}

We can compute $\spe(v)[|v|-k+1]$ if we know $\f_{T[i:i+|v|-1],\spe(v)}$.
In the next lemma, we show how to compute
$\f_{T[i:i+|v|-1],\spe(v)}$ for all p-nodes $\prev(v)$.

\begin{lemma}\label{pitime}
  For every p-node $\prev(v)$,
  we can compute $\f_{T[i:i+|v|-1],\spe(T[i:i+|v|-1])}$
  with $\prev(v) = \prev(T[i:i+|v|-1]))$
  in amortized $O(\pi)$ time if we know pair $\fpos(T[i:], i)$.
\end{lemma}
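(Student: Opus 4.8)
The plan is to compute all the p-functions $\f_{T[i:i+|v|-1],\spe(T[i:i+|v|-1])}$ by a single bottom-up pass that reuses the f-array $\fpos(T[i:])$ stored at each p-node. The key observation is that for a substring $q = T[i:i+|v|-1]$, the value $\f_{q,\spe(q)}(x)$ for a parameterized character $x \in \Pi_T$ occurring in $q$ is determined exactly by the \emph{order} in which the distinct symbols of $\Pi_T$ make their first appearance when scanning $q$ left to right: if $x$ is the $t$-th distinct parameterized symbol to appear in $q$, then $\spe(q)$ uses the $t$-th smallest symbol of $\Pi$ at that position, so $\f_{q,\spe(q)}(x)$ is the $t$-th element of $\Pi$. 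Therefore, to know $\f_{q,\spe(q)}$ it suffices to sort the entries of $\fpos(q)$ that are at most $|v|$ (i.e.\ the parameterized symbols that actually occur in the length-$|v|$ prefix) by their first-occurrence position; the rank of $\rank(x)$ in that sorted order gives the image of $x$.

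First I would make precise that $\fpos(T[i:])$ restricted to positions $\le |v|$ encodes exactly the set of parameterized symbols occurring in $q = T[i:i+|v|-1]$ together with their first-occurrence offsets, and that reading those offsets in increasing order yields the sequence of distinct $\Pi$-symbols of $q$ in order of first appearance. Then the p-function is the bijection sending the $t$-th such symbol to the $t$-th symbol of $\Pi$ (and, implicitly, fixing $\Sigma$). So computing $\f_{q,\spe(q)}$ from $\fpos(q)$ amounts to: scan the array $\fpos(q)$ of length $\pi$, collect the at most $\pi$ pairs $(\rank(x), \fpos(q)[\rank(x)])$ with $\fpos(q)[\rank(x)] \le |v|$, sort them by the second coordinate, and assign images accordingly. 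A naive sort costs $O(\pi \log \pi)$, but since the positions are bounded by $|T|$ (integers of polynomial size) — or, better, since we only ever need the \emph{relative} order and $\pi \le n$ — this can be done in $O(\pi)$ time per p-node by radix/bucket sort; combined with Lemma~1 bounding the number of \emph{branching} p-nodes by $O(n/(\pi+\sigma))$ and a charging argument over the chain-structure of non-branching p-nodes, the total is $O(n)$, which is where the claimed \emph{amortized} $O(\pi)$ per p-node comes from.

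The main obstacle I anticipate is the amortization: there can be up to $\Theta(n)$ p-nodes overall (not just $O(n/(\pi+\sigma))$ branching ones), so paying $\Theta(\pi)$ at \emph{every} p-node would give $\Theta(n\pi)$, not $O(n)$. The resolution is that non-branching p-nodes form chains, and along a chain of non-branching p-nodes the stored pair $(\fpos(T[i:]), i)$ — and hence the relevant prefix of the f-array — changes only incrementally as $|v|$ grows: each new depth either introduces at most one new parameterized symbol into the "occurring" set or changes nothing about the sorted order of first occurrences. So instead of re-sorting from scratch at each node of a chain, I would maintain the sorted list of (first-occurrence-position, symbol) pairs incrementally along the chain, inserting at most one new element per edge, which costs $O(1)$ amortized per edge plus one $O(\pi)$-time initialization at the top of each chain (i.e.\ at each branching p-node or at a child of a non-p-node). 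Since there are $O(n/(\pi+\sigma))$ such chain-tops and $O(n)$ edges total, the grand total is $O(n \cdot \frac{\pi}{\pi+\sigma}) + O(n) = O(n)$, giving the amortized $O(\pi)$ bound.

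There is one subtlety worth flagging in the write-up: the f-array as defined has length exactly $\pi$ and indexes symbols by $\rank(\cdot)$, so the bucket/radix sort is over a universe of size $\pi$ (for the ranks) with keys in $\{1,\dots,|T|\}$ (for the positions); one must be careful that the bucket arrays used for this sort are reset in $O(\pi)$ rather than $O(n)$ time — standard, but it should be stated, since reusing the same $O(n)$-size bucket across all p-nodes without careful clearing would break the bound. With that handled, the lemma follows by combining the structural characterization of $\f_{q,\spe(q)}$ via first-occurrence order, the incremental maintenance along non-branching chains, and the $O(n/(\pi+\sigma))$ bound on chain-tops from Lemma~1.
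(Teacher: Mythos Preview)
Your core observation is exactly right and matches the paper: the p-function $\f_{q,\spe(q)}$ is the bijection that sends the $l$-th distinct parameterized symbol of $q$ (in order of first appearance) to the $l$-th smallest element of $\Pi$, and this order is recoverable from $\fpos(T[i:])$ by sorting the entries that fall within the first $|v|$ positions.

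Where you diverge from the paper is in how the sorting is organised. The paper does \emph{not} sort per node, nor does it maintain anything incrementally along chains. Instead it collects, once and for all, the multiset
\[
\mathbf{I} \;=\; \bigl\{\,(i_{\prev(v)},\, \fpos(T[i_{\prev(v)}:])[k])\ :\ \prev(v)\ \text{is a (branching) p-node},\ 1\le k\le \pi\,\bigr\}
\]
and radix-sorts it globally. Since there are $O(n/(\sigma+\pi))$ branching p-nodes, $|\mathbf{I}| = O\!\bigl(\frac{n}{\sigma+\pi}\cdot \pi\bigr)\subseteq O(n)$, so one radix sort over integer pairs bounded by $n$ runs in $O(n)$ total time; this is where the amortised $O(\pi)$ per node comes from. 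No chain argument, no incremental updates, no per-node bucket-resetting issues.

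Your chain-based plan is not wrong in spirit, but as written it has a gap. You assert that along a chain of non-branching p-nodes the stored pair $(\fpos(T[i:]),i)$ ``changes only incrementally as $|v|$ grows''. With the propagation rule actually used (store the \emph{largest} leaf index in the subtree), the value $i$ can drop at every step down a chain whenever the current maximal leaf branches off into a non-p-node sibling; when $i$ changes, $\fpos(T[i:])$ can change at every coordinate, so your incremental insertion argument breaks. The fix is to pick one leaf from the bottom of each chain and use that same $i$ at every node of the chain (the paper's footnote explicitly permits any $i$ in the subtree), after which your increment-by-at-most-$\pi$-insertions-per-chain argument goes through. Even then, the global-radix-sort route is strictly simpler and avoids all of this; and since the p-array is only ever stored at \emph{branching} p-nodes, the lemma is only needed at those $O(n/(\sigma+\pi))$ nodes, which dissolves the ``$\Theta(n)$ p-nodes'' worry that motivated your chain machinery in the first place.
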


\begin{proof}
  Let $x_j$ be the $j$th smallest element of $\Pi$ in lexicographically order.
  Let $y_l$ denote the parameterized character in $S = T[i:i+|v|-1]$ such that
  if $j$ is the left-most occurrence of $y_l$ in $S$ (i.e. $j = \min\{h \mid S[h] = y_l\}$),
  then $|\Pi_{S[1..j]}| = l$.
  For instance, for $S = \mathtt{xAzAyA}$,
  then $y_1=\mathtt{x}$, $y_2=\mathtt{z}$, and $y_3=\mathtt{y}$.
  Let $(i_{\prev(v)}, \fpos(T[i_{\prev(v)}:])$ denote the pair stored in
  p-node $\prev(v)$.
  Consider a set
  $\mathbf{I} = \{(i_{\prev(v)}, \fpos(T[i_{\prev(v)}:])[k]) \mid \text{$\prev(v)$ is a p-node}, 1 \leq k \leq \pi\}$ of integer pairs.
  We sort the elements of $\mathbf{I}$
  so that we can compute in $O(1)$ time
  $\f_{T[i:i+|v|-1], \spe(v)}(y_l)= x_l$
  for all p-nodes $\prev(v)$, where $x_l$ is the $l$th smallest
  parameterized character that occurs in $\spe(v)$.
  We can sort the elements of $\mathbf{I}$ 
  in a total of $O(n)$ time by radix sort,
  since there are $O(\frac{n}{\sigma + \pi})$ p-nodes
  and each f-array $\fpos(T[i:])$ is of length $\pi$.  
  This completes the proof.
\end{proof}

We can easily compute $\fpos(T[i :])$ by the following lemma:
\begin{lemma}
  \label{newfpos}
  Let $q$ be a p-string. Let $x\in\Pi$, $y\in\Pi\cup\Sigma$ and $x\neq y$. Then the following equations hold:
  \begin{eqnarray*}
    \fpos(xq)[\rank(x)] & = & 1, \\
    \fpos(xq)[\rank(y)] & = & \fpos(q)[\rank(y)]+1.
  \end{eqnarray*}
\end{lemma}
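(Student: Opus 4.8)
The plan is to prove both equations directly from Definition~\ref{function} of the f-array together with Definition~\ref{prev} of the previous encoding (or, equivalently, just from the elementary fact that $\fpos(w)[\rank(z)]$ is the position of the first occurrence of $z$ in $w$). The statement is really a case analysis on whether a character of $\Pi$ equals the prepended character $x$ or not, so the whole argument is short.

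First I would recall that, by definition, $\fpos(q)[\rank(z)] = i_{q,z}$, the index of the left-most occurrence of $z$ in $q$, for each $z \in \Pi_T$. Prepending a single character $x \in \Pi$ to $q$ shifts every position of $q$ by one: if $w = xq$ then $w[1] = x$ and $w[j+1] = q[j]$ for $1 \le j \le |q|$. From this the first equation is immediate: since $w[1] = x$, the first occurrence of $x$ in $xq$ is at position $1$, so $\fpos(xq)[\rank(x)] = 1$. For the second equation, fix $y \in \Pi \cup \Sigma$ with $y \ne x$; in particular $y$ does not occur at position $1$ of $xq$, so the first occurrence of $y$ in $xq$ occurs at some position $j+1 \ge 2$, where $j$ is exactly the first occurrence of $y$ in $q$ (because of the one-position shift and because no occurrence of $y$ was introduced at the front). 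Hence $\fpos(xq)[\rank(y)] = j + 1 = \fpos(q)[\rank(y)] + 1$.

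I would also note the mild edge cases for completeness: if $y$ does not occur in $q$ at all, then $y$ does not occur in $xq$ either (since $x \ne y$), so both sides are ``undefined'' in the same way (or, under the convention that such entries are not indexed/stored, the equation is vacuous); and the f-array here is indexed by $\Pi_T$, so the statement is used only for characters that actually occur in the relevant suffix, which is the case in the intended application where $q = T[i{+}1:]$ and $xq = T[i:]$.

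The only thing resembling an obstacle is purely bookkeeping: making sure the index shift is stated with the right off-by-one and that the two cases ($y = x$ excluded, $y \ne x$) exhaust the possibilities for the character whose first occurrence we track. There is no combinatorial difficulty; the lemma is essentially a restatement of ``prepending one symbol shifts first-occurrence positions by one, and creates a fresh first occurrence only for the prepended symbol itself.'' So the proof will be a two- or three-line argument citing Definitions~\ref{prev} and~\ref{function}.
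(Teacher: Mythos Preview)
Your argument is correct and is exactly the kind of direct unfolding of the definition of $\fpos$ that the paper intends; in fact the paper gives no proof at all for this lemma, treating it as immediate. One small fix: the definition you want to cite is the F-array definition, not Definition~\ref{function} (which is the p-function $\f_{q,r}$); the F-array definition is unlabeled in the paper, so you may simply say ``by definition of $\fpos$'' rather than citing a numbered definition.
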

Thus we can compute f-arrays $\fpos(T[i:])$ for all $1 \leq i \leq n$
in a total of $O(n)$ time.

Since all the afore-mentioned procedures take $O(n)$ time each,
we obtain the main theorem of this section.
\begin{theorem}
  Given a p-string $T$ of length $n$ over alphabet $\Sigma \cup \Pi$
  with $\Pi = \{0,...,n^{c_1}\}$ and $\Sigma = \{n^{c_1} +1,...,n^{c_2}\}$ for some positive constants $c_1$ and $c_2$,
  we can construct $\PSTray(T)$ in $O(n)$ time from $\PSTree(T)$.
\end{theorem}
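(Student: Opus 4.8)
The plan is to assemble the components established in the previous lemmas into a single $O(n)$-time pipeline. Everything in $\PSTray(T)$ except the p-arrays is cheap: $\PSA(T)$ and $\PLCP(T)$ are obtained from $\PSTree(T)$ in $O(n)$ time by the well-known $O(n)$-time extraction of the suffix array and LCP array from the suffix tree (here from $\PSTree(T)$ with its leaves kept in lexicographic order); the p-nodes and branching p-nodes are identified in $O(n)$ time by the lemma stating precisely this; and one depth-first traversal records, at every p-node $v$, its $\PSA(T)$-range $l_v$ and, at every non-branching p-node, the pointer to its unique p-node child. Thus the whole task reduces to computing, for every branching p-node $\prev(v)$, its p-array $A(\prev(v))$, and this is where Lemmas~\ref{newfpos}, \ref{pitime}, \ref{lem:spe_equality} and \ref{lem:f_to_spe} come in.

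First I would handle the f-arrays $\fpos(T[i:])$. By Lemma~\ref{newfpos}, $\fpos(T[i:])$ differs from $\fpos(T[i+1:])$ only by setting the slot of $T[i]$ to $1$ and adding $1$ to every other already-defined slot, so a right-to-left scan of $T$ maintains all of them incrementally in $O(n)$ total time in a shared representation (essentially a last-occurrence table plus a running counter, as in the computation of $\prev$), from which any individual $\fpos(T[i:])$ is materialized in $O(\pi)$ time on demand. A bottom-up traversal of $\PSTree(T)$ then attaches to each branching p-node $\prev(v)$ a single representative pair $(\fpos(T[i:]),i)$, where $i$ is, say, the largest leaf position in the subtree of $\prev(v)$; since there are only $O(\frac{n}{\sigma+\pi})$ branching p-nodes and $\frac{n}{\sigma+\pi}\cdot\pi\le n$, materializing all these f-arrays costs $O(n)$.

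Next I would turn each pair into a p-function and then into a p-array. Applying Lemma~\ref{pitime}, one global radix sort of the multiset of integer pairs $(i,\fpos(T[i:])[k])$ taken over the $O(\frac{n}{\sigma+\pi})$ branching p-nodes and $1\le k\le\pi$ — hence $O(n)$ pairs, drawn from a universe of size polynomial in $n$ — yields, in $O(n)$ total time, for every such $\prev(v)$ the p-function $\f_{T[i:i+|v|-1],\spe(v)}$ in a form evaluable in $O(1)$. Given this, I would fill $A(\prev(v))$ by scanning the out-edges of $\prev(v)$: if the label of an edge to child $u$ starts with a symbol $k$, then the corresponding continuation symbol $x$ in $\spe$-space is $k$ itself when $k\in\Sigma$; is the smallest parameter not occurring in $\spe(v)$, identified from $|\Pi_{\spe(v)}|=|\Pi_{T[i:i+|v|-1]}|$ which we propagate together with the pair, when $k=0$; and is $\spe(v)[|v|+1-k]$ when $k\ge 1$, which by Lemma~\ref{lem:spe_equality} is precisely the symbol in question and by Lemma~\ref{lem:f_to_spe} is determined in $O(1)$ from $\f_{T[i:i+|v|-1],\spe(v)}$ and $T$. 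In all cases $x$, hence $\rank(x)$, is obtained in $O(1)$, and we set $A(\prev(v))[\rank(x)]$ to point to $u$. Initializing the $\sigma+\pi$ cells of each p-array to nil costs $O((\sigma+\pi)\cdot\frac{n}{\sigma+\pi})=O(n)$ by the lemma bounding the number of branching p-nodes, and the edge scans over all branching p-nodes cost $O(n)$ in total; summing all phases yields the claimed $O(n)$ bound.

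The main obstacle is the third phase, and within it the amortization of Lemma~\ref{pitime}: computing $\f_{T[i:i+|v|-1],\spe(v)}$ branching p-node by branching p-node in isolation would cost $\Theta(\pi)$ each and need not sum to $O(n)$, so the argument must instead batch all the $O(n)$ integer pairs into one radix sort over a polynomial-size universe and read the p-functions off the sorted sequence in a single pass — that is the delicate accounting. A secondary point to verify is that the per-edge case distinction ($k\in\Sigma$, $k=0$, $k\ge1$) is genuinely $O(1)$: this needs each branching p-node to know $|\Pi_{\spe(v)}|$ (which it does, since this equals $|\Pi_{T[i:i+|v|-1]}|$ and is propagated with the representative pair), and it relies on the fact — immediate from the definitions of $\prev$ and $\spe$, and already needed for $A(\prev(v))$ to be well defined — that distinct children of $\prev(v)$ induce distinct $\spe$-continuation symbols, so that no array cell is written twice.
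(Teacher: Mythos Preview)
Your proposal is correct and follows essentially the same route as the paper: extract $\PSA$/$\PLCP$ and the (branching) p-nodes in $O(n)$, propagate one representative pair $(\fpos(T[i:]),i)$ per node, batch all $(i,\fpos(T[i:])[k])$ pairs into a single radix sort to obtain the p-functions (Lemma~\ref{pitime}), and then fill each $A(\prev(v))$ via Lemmas~\ref{lem:spe_equality} and~\ref{lem:f_to_spe}. Your write-up is in fact slightly more careful than the paper's in two places---you make explicit the $k\in\Sigma$ / $k=0$ / $k\ge 1$ case split at each out-edge, and you note that the $\fpos$ arrays should be materialized on demand rather than all at once---but these are refinements of the same argument, not a different approach.
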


\section{Conclusions and open questions}

In this paper, we proposed an indexing structure for
parameterized pattern matching (PPM)
called the parameterized suffix tray $\PSTray(T)$,
where $T$ is a given text string.
Our $\PSTray(T)$ uses $O(n)$ space and
supports pattern matching queries in
$O(m + \log (\sigma + \pi) + \occ)$ time,
where $n = |T|$,
$m$ is the query pattern length,
$\sigma$ and $\pi$ are respectively the numbers of
distinct static characters and distinct parameterized characters
occurring in $T$,
and $\occ$ is the number of pattern occurrences to report.
We also showed how to construct $\PSTray(T)$ in $O(n + s(n))$ time,
where $s(n)$ denotes the time complexity to build
the parameterized suffix tree $\PSTree(T)$ for $T$.
It is known that $s(n) =\min \{n\pi , n(\log(\pi + \sigma)\}$~\cite{Baker96,Shibuya04,FujisatoNIBT19}.

On the other hand, if we use hashing for implementing
the branches of the parameterized suffix tree $\PSTree(T)$,
one can trivially answer PPM queries in $O(m+\occ)$ time with $O(n)$ space.
The best linear-space deterministic hashing we are aware of 
is the one by Ru\v{z}i\'{c}~\cite{Ruzic08},
which can be built in $O(n (\log \log n)^2)$ time for a set of $n$ keys
in the word RAM model with machine word size $\Omega(\log n)$.
By associating each node of $\PSTree(T)$ with a unique integer
(e.g. the pre-order rank),
one can regard each branch in $\PSTree(T)$ as an integer from the universe
of polynomial size in $n$,
each fitting in a constant number of machine words.
This gives us a deterministic $O(n (\log \log n)^2 + s(n))$-time
algorithm for building $\PSTree(T)$ with $O(m + \occ)$-time PPM queries.
Still, it is not known whether a similar data structure
can be build in $O(n + s(n))$ time.
We conjecture that our $O(m + \log(\sigma + \pi) + \occ)$ PPM query time
would be the best possible for any indexing structure
that can be build in $O(n + s(n))$ time.
Proving or disproving such a lower bound is an intriguing open problem.

\section*{Acknowledgments}
This work was supported by JSPS KAKENHI Grant Numbers
JP18K18002 (YN),
JP17H01697 (SI),
JP20H04141 (HB), JP18H04098 (MT),
and JST PRESTO Grant Number JPMJPR1922 (SI).

\bibliographystyle{abbrv}
\bibliography{ref}

\end{document}